\newcommand{\ket}[1]{\ensuremath{\left|#1\right\rangle}}
\newcommand{\dketbra}[1]{\ensuremath{\left| #1 \right\rangle\left\langle #1 \right|}}
\newcommand{\ave}[1]{\ensuremath{\left\langle#1\right\rangle}}
\newtheorem{theorem}{Theorem}
\begin{document}
\title{The capacity of coherent-state adaptive decoders \\ with interferometry and single-mode detectors}
\author{Matteo Rosati}
\affiliation{NEST, Scuola Normale Superiore and Istituto Nanoscienze-CNR, I-56127 Pisa,
Italy.}
\author{Andrea Mari} 
\affiliation{NEST, Scuola Normale Superiore and Istituto Nanoscienze-CNR, I-56127 Pisa,
Italy.}
\author{Vittorio Giovannetti} 
\affiliation{NEST, Scuola Normale Superiore and Istituto Nanoscienze-CNR, I-56127 Pisa,
Italy.}

\begin{abstract}
A class of Adaptive Decoders (AD's) for coherent-state sequences is studied, including in particular the most common technology for optical-signal processing, e.g., interferometers, coherent displacements and photon-counting detectors. More generally we consider AD's comprising adaptive procedures based on passive multi-mode Gaussian unitaries and arbitrary single-mode destructive measurements. For classical communication on quantum phase-insensitive Gaussian channels with a coherent-state encoding, we show that the AD's  optimal information transmission rate is not greater than that of a single-mode decoder. Our result also implies that the ultimate classical capacity of quantum phase-insensitive Gaussian channels is unlikely to be achieved with the considered class of AD's.
\end{abstract}
\maketitle

\section{Introduction}\label{intro}
Quantum communication theory is a promising field for the application of quantum technology, since its predictions could be applied in the short-term in several settings of practical relevance. An important example is communication on free-space or optical-fiber links, which are well described theoretically by quantum phase-insensitive Gaussian channels~\cite{HolevoBOOK,HolGiovaRev,CAVES}, e.g., the lossy bosonic channel~\cite{EXACT}. \\
The maximum transmission rate of classical information on a quantum channel, known as its capacity, is provided by the Holevo-Schumacher-Westmoreland (HSW) theorem~\cite{schumawest,holevo1,holevo2,holevo3,winter}. In particular for quantum phase-insensitive Gaussian channels the capacity at constrained average input energy can be achieved~\cite{gaussOpt,gaussOpt2,maj1,maj2} by a simple \emph{separable encoding}, i.e., sending sequences of coherent states~\cite{RevGauss}, each of them constituting a letter for a single use of the channel or communication mode. This fact may seem surprising at first, since coherent states are among the simplest states of the electromagnetic field and are often regarded as fundamentally classical. Nevertheless they are sufficient to achieve the maximum communication rate allowed by quantum mechanics on a broad class of channels of considerable practical relevance. Unfortunately the truly quantum challenge posed by these systems seems to reside in the decoding procedures, since all known capacity-achieving measurements require \emph{joint decoding} operations~\cite{hauswoot,schumawest,winter,oga,oganaga,hayanaga,hayashi,seq1,seq2,sen, Arikan,wildeHayden,polarWildeGuha,NOSTRO}, i.e., reading out entire blocks of letters at once  by projecting onto arbitrary entangled superpositions of the codewords. Hence even the classical coherent-state encoding requires a highly non-trivial quantum decoding to achieve capacity. Such joint quantum measurements are difficult to design with current technology~\cite{wildeguha1,wildeguha2,takeokaGuha,takeokaGuha2,lee,Guha1,Guha2,Banaszek,NOSTROHad}, so that the quest for an optimal decoder of separable coherent-state codewords that would finally trigger practical applications is still open. Given the difficulty of implementing truly joint quantum measurements, research has then mainly focused on decoding coherent states with the general class of Adaptive Decoders (AD) depicted in Fig.~\ref{fig1}a. The latter combines the available \emph{single-mode} technology, e.g., photodetectors and local transformations, with \emph{multi-mode passive} interferometers and \emph{classical feedforward control}. The rationale behind this choice is that introducing correlations between modes during the decoding procedure may increase the transmission rate of simple separable measurements, getting closer to the  structure of joint quantum measurements that seems to be ultimately necessary to achieve the capacity of phase-insensitive Gaussian channels. \\
\begin{figure}[t!]
\includegraphics[scale=.24]{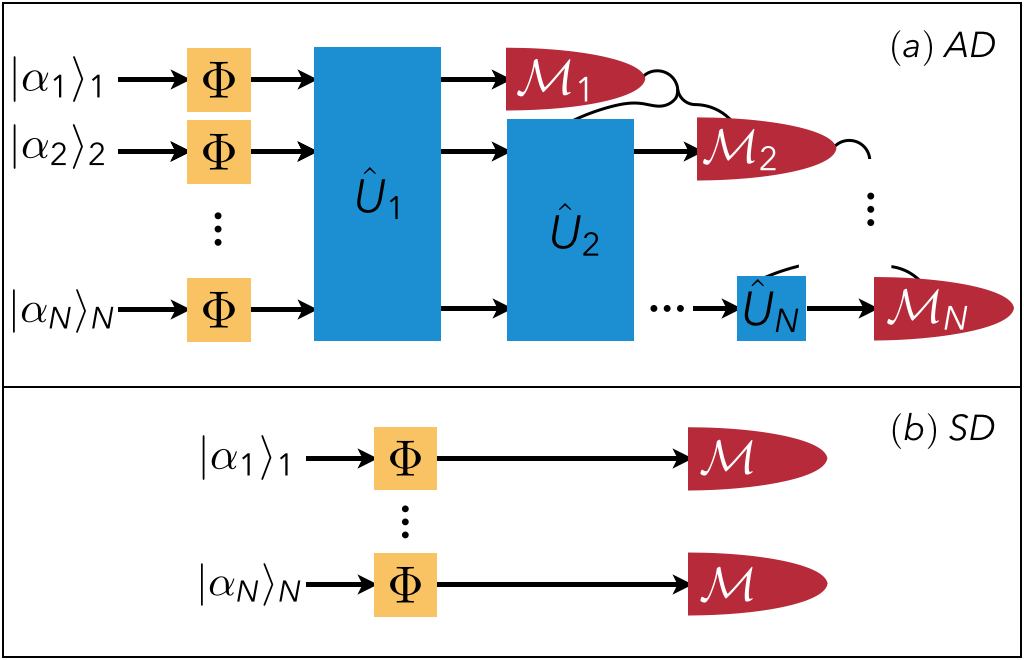}
\caption{Schematic depiction of the class of (a) Adaptive Decoders (AD) and (b) Separable Decoders (SD) considered, whose maximum information transmission rate is proved to be equal. (a) In the AD case the sender, Alice, encodes the message into separable sequences of coherent states $\ket{\alpha_{1}}_{1}\otimes\cdots\otimes\ket{\alpha_{N}}_{N}$ and sends it to the receiver, Bob, with $N$ distinct uses of a quantum phase-insensitive Gaussian channel $\Phi$ (yellow/light-gray boxes), Eq.~\eqref{piGch}. Bob's AD comprises multi-mode passive Gaussian interferometers $\hat{U}_{j}$ (blue/gray boxes), Eq.~\eqref{pass}, and arbitrary destructive single-mode measurements $\mathcal{M}_{j}$ (red/dark-gray shapes), Eq.~\eqref{POVM}, adaptively dependent on the measurement results of previous modes and applied successively on the remaining modes. (b) In the SD case Alice uses the same encoding but Bob performs the same measurement $\mathcal{M}$ on each mode and cannot use adaptive procedures.}\label{fig1}
\end{figure}
On the contrary, in this Letter we prove that the maximum information transmission rate of such channels with coherent-state encoding and AD \emph{is equal to} that obtained with a Separable Decoder (SD) employing the same measurement on each mode, as shown in Fig.~\ref{fig1}b.
The general idea behind our proof is to map the quantum AD into an effective classical programmable channel with feedback to the encoder. Then we obtain our results by extending Shannon's feedback theorem~\cite{ShanFeed,covThom} to this kind of channels.\\
Our work gives several major contributions: i) it implies the conjecture by Chung \emph{et al.}~\cite{guhaDet,guhaDet1}, namely that adaptive passive Gaussian interactions, single-mode displacements and photodetectors do not increase the optimal transmission rate; ii) if the HSW capacity of phase-insensitive Gaussian channels is achieved only by joint measurements, as the evidence suggests so far, then it cannot be achieved with our AD scheme; iii) it extends the results of Takeoka and Guha~\cite{takeokaGuha2}, who considered only Gaussian measurements; iv) it extends the analysis made by Shor~\cite{ShorAd} in the context of trine states to coherent states and passive interactions. Our results, though already envisaged in previous works on the subject, have strong relevance for future research on practical decoders: i) they extend the study of decoders by considering arbitrary single-mode manipulations before measurement, including non-Gaussian and non-unitary ones; ii) they exclude a decoding advantage of adaptive passive Gaussian interactions, which are the easiest to realize in practice, suggesting that more difficult interactions are necessary to achieve capacity. Furthermore the possibility of employing ancillary states is partially included in our AD scheme: this is the case if each ancilla is allowed to interact just with one mode before being measured; otherwise, i.e., if the ancillae can interact with several modes, the problem of determining the decoder's optimal rate remains open and could give a practical advantage over SD's~\cite{nota}. \\
The article is structured as follows: in Sec.~\ref{ad} we describe in detail the communication protocol and the class of decoders considered; in Sec.~\ref{thm} we demonstrate that the AD's optimal rate is equal to the SD's one; in Sec.~\ref{end} we discuss implications and draw our conclusions.

\section{The adaptive decoder}\label{ad}
Let us suppose that the sender, Alice, wants to transmit a classical message on $N$ independent communication modes, employing coherent states of the electromagnetic field. The latter are defined in terms of the field's annihilation and creation operators $\hat{a}$, $\hat{a}^{\dag}$ as displaced vacuum-states of phase-space amplitude $\alpha\in\mathbb{C}$, i.e., $\ket{\alpha}=\hat{D}(\alpha)\ket{0}$, with $\hat{D}(\alpha)=\exp[\alpha\hat{a}^{\dag}-\alpha^{*}\hat{a}]$ the displacement operator. 
The messages, represented by the sequence of classical input random variables $A_{(1,N)}$ with letters $A_{j}=\left\{\alpha_{j}\in\mathbb{C}\right\}$ for each $j=1,\cdots,N$, are encoded into a separable sequence of optical coherent states $\ket{\alpha_{(1,N)}}=\ket{\alpha_{1}}_{1}\otimes\cdots\otimes\ket{\alpha_{N}}_{N}$, one for each mode $\ket{\cdot}_{j}$, where we have used the compact notation $c_{(j,\ell)}$, $j\leq\ell$, to indicate a sequence of quantities $c_{j},\cdots,c_{\ell}$ on different modes, from the $j$-th to the $\ell$-th one. Each message is chosen according to a joint probability distribution $P_{A_{(1,N)}}\left(\alpha_{(1,N)};E\right)$ at constrained average input energy per mode $E$, i.e., 
 \begin{eqnarray}\label{eneCon}\int d^{2N}\alpha_{(1,N)}P_{A_{(1,N)}}\left(\alpha_{(1,N)};E\right)\sum_{j=1}^{N}|\alpha_{j}|^{2}\leq N E.\end{eqnarray}Let us also suppose that the transmission medium is well described by a quantum phase-insensitive Gaussian channel, represented by a linear Completely Positive and Trace Preserving (CPTP) map $\Phi$ on the Hilbert space of a single mode and completely defined by its action on the displacement operator, i.e.,
\begin{eqnarray}\label{piGch}
\hat{D}(\alpha)\xrightarrow{\Phi}\hat{D}(\mu_{1}\alpha)e^{-\mu_{2}\frac{|\alpha|^{2}}{2}},
\end{eqnarray}
in terms of two parameters $\mu_{i}\geq0$ satisfying the constraint $\mu_{2}\geq|1-(\mu_{1})^{2}|$~\cite{HolevoBOOK}. As shown in~\cite{gaussOpt,gaussOpt2,maj1,maj2}, 
the separable coherent-state encoding discussed above achieves the classical capacity of $\Phi$, when its probability distribution is i.i.d. and Gaussian on each mode.\\
The receiver, Bob, has an AD that outputs the sequence of classical random variables $Y_{(1,N)}$, where $Y_{j}=\left\{y_{j}\in\mathcal{I}\right\}$ for all modes $j=1,\cdots,N$ and $\mathcal{I}$ is the set of possible single-mode outcomes, which can be discrete or continuous, e.g., $\mathcal{I}=\mathbb{R}$ for homodyne detection. The probability distribution of the output variables can be computed from the conditional probability of obtaining an outcome sequence $y_{(1,N)}$ if the input sequence $\alpha_{(1,N)}$ was sent, i.e., $P_{Y_{(1,N)}|A_{(1,N)}}\left(y_{(1,N)}|\alpha_{(1,N)}\right)$. The latter is determined by the specific decoding operations of the AD, Fig.~\ref{fig1}a, comprising for all $j=1,\cdots,N$: 
\begin{itemize}
\item a multi-mode passive Gaussian unitary $\hat{U}_{j}\left(y_{(1,j-1)}\right)$, i.e., a network of beam-splitters and phase-shifters conditioned on the outcomes of previous measurements, acting on the set of modes from the $j$-th to the $N$-th as
\begin{eqnarray}\hat{U}_{j}\left(y_{(1,j-1)}\right)\ket{\alpha_{(j,N)}}=\ket{U_{j}\left(y_{(1,j-1)}\right)\alpha_{(j,N)}},\hspace{-.5cm}\label{pass}
\end{eqnarray} 
where $U_{j}$ is the $(N-j+1)$-dimensional unitary matrix representing $\hat{U}_{j}$ in phase-space, applied directly to $\alpha_{(j,N)}$ as a phase-space vector;
\item single-mode operations and a final destructive measurement, altogether represented by a local Positive Operator-Valued Measure (POVM) $\mathcal{M}_{j}\left(\lambda\left(y_{(1,j-1)}\right)\right)$ chosen among a set of possible POVM's that are labeled by the (discrete or continuous) index $\lambda\in\Lambda$ conditioned on the outcomes of previous modes. Each POVM is defined by a collection of positive operators corresponding to the possible single-mode outcomes, 
\begin{eqnarray}\label{POVM}
\mathcal{M}_{j}\left(\lambda\left(y_{(1,j-1)}\right)\right)=\left\{\hat{E}_{y_{j}}\left(\lambda\left(y_{(1,j-1)}\right)\right)\right\}_{y_{j}\in\mathcal{I}},\hspace{-.5cm}
\end{eqnarray} 
where the operators $\hat{E}_{y_{j}}$ sum up to the identity on the Hilbert space of a single mode.
\end{itemize}
For our results to hold, a crucial assumption is that the single-mode POVM's completely destroy the measured state before any information is sent to the rest of the system; if instead Bob can perform partial measurements the AD's rate may increase, see~\cite{ShorAd}. Let us also note that the generic set of allowed POVM's described above can be restricted case by case by properly choosing the $\hat{E}_{y}$. For example the simplest toolbox for optical-signal processing is that of the Kennedy receiver~\cite{Ken} with POVM's of the form
\begin{eqnarray}\label{ken1}
\mathcal{M}^{ken}\left(\lambda\right)&=&\left\{E_{0}(\lambda),\mathbf{1}-E_{0}(\lambda)\right\},\\
E_{0}(\lambda)&=&\hat{D}^{\dag}(\lambda)\dketbra{0}\hat{D}(\lambda),\label{ken2}
\end{eqnarray}
where the index $\lambda\in\mathbb{C}$ is the amplitude of a phase-space displacement in this case. Since the latter depends adaptively on previous outcomes, the AD with a single-mode Kennedy structure behaves similarly to a Dolinar receiver~\cite{Dol}.

\section{The optimal rate} \label{thm}
The performance of a quantum decoder for the transmission of classical information can be evaluated by computing the mutual information of its classical input and output random variables. The latter is defined for our AD as
\begin{eqnarray}\label{mInfo}
I\left(A_{(1,N)}:Y_{(1,N)}\right)=H\left(Y_{(1,N)}\right)-H\left(Y_{(1,N)}\big|A_{(1,N)}\right),\hspace{.5cm}
\end{eqnarray} 
i.e., the difference of the Shannon entropy~\cite{covThom} of $Y_{(1,N)}$ and the Shannon conditional entropy of $Y_{(1,N)}$ given $A_{(1,N)}$. The AD's optimal information transmission rate then is obtained by maximizing the mutual information \eqref{mInfo} over the input distribution with energy constraint $E$ and the decoding operations and regularizing it as a function of the number of uses $N$, i.e., 
\begin{eqnarray}\label{opDef}
\mathcal{R}_{AD}(E)=\lim_{N\rightarrow\infty}~\max_{\substack{ P_{A_{(1,N)}}\left(\alpha_{(1,N)};E\right), \\ \hat{U}_{j}\left(y_{(1,j-1)}\right), \\\lambda\left(y_{(1,j-1)}\right)\in\Lambda} }\frac{I\left(A_{(1,N)}:Y_{(1,N)}\right)}{N}.\hspace{.5cm}
\end{eqnarray}
We want to compare the AD with the SD of Fig.~\ref{fig1}b, comprising for each use of the channel $\Phi$ only a single-mode POVM $\mathcal{M}(\lambda)$ chosen from the same set of those in the AD parametrized by $\lambda\in\Lambda$, Eq.~\eqref{POVM}, but without any interaction or classical communication between modes. Obviously, the optimal rate of this SD is obtained by maximizing the mutual information of the single-mode input and output variables $A_{1}$ and $Y_{1}$ over the input distribution at constrained energy $E$ and the POVM's parameter, i.e.,
\begin{eqnarray}\label{opSep}
\mathcal{R}_{SD}(E)=\max_{P_{A_{1}}\left(\alpha_{1};E\right),~\lambda\in\Lambda}I\left(A_{1}:Y_{1}\right).
\end{eqnarray}
In order to show that the optimization \eqref{opDef} reduces to \eqref{opSep}, we find it useful to consider a more general decoder comprising the AD and a classical feedback link from Bob to Alice, that certainly cannot decrease the optimal rate \eqref{opDef}. Exploiting this feedback and the phase-insensitive property of $\Phi$, Alice can always perform the $\hat{U}_{j}$ instead of Bob. Hence all the AD's interactions are represented by a classical feedback to the encoder, that rearranges the remaining sequences $\alpha_{(j,N)}\in A_{(j,N)}$ into new sequences $\beta_{(j,N)} \in B_{(j,N)}$ with $B_{j}=\left\{\beta_{j}\in\mathbb{C}\right\}$ for all modes $j=1,\cdots,N$, before transmission on the channel. Crucially, each choice of $\hat{U}_{j}$ corresponds to a different rearrangement performed by the encoder in such a way that the total average-energy constraint \eqref{eneCon} is still respected by the joint probability distribution $P_{B_{(1,N)}}(\beta_{(1,N)};E)$ of the new messages $B_{(1,N)}$. \begin{figure}[t!]
\includegraphics[scale=.24]{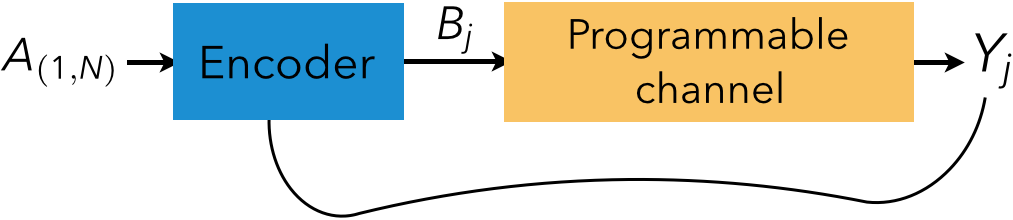}
\caption{Schematic depiction of the classical communication scheme induced by the quantum AD, Fig.~\ref{fig1}a. The input sequence $\alpha_{(1,N)}\in A_{(1,N)}$ is encoded (blue/gray box) into single letters $\beta_{j}\in B_{j}$ that are sent one-by-one on a classical memory channel (yellow/light-gray box) with output $y_{j}\in Y_{j}$, for each $j=1,\cdots,N$. 
The adaptive passive interactions $\hat{U}_{j}$ of the quantum scheme here correspond to a classical feedback to the encoder, i.e., the encoding function that generates each $\beta_{j}$ depends on the input message $\alpha_{(1,N)}$ and on all previous results $y_{(1,j-1)}$. The single-mode phase-insensitive channel $\Phi$ and adaptive POVM $\mathcal{M}_{j}$ employed on each letter $\beta_{j}$ instead correspond to several uses of a classical programmable channel, whose memory at each use depends only on previous outcomes through the parameter $\lambda$ characterizing the measurement, as in Eq.~\eqref{cond}.}\label{fig2}
\end{figure}\\
As a function of the encoded variables $\beta_{j}$, the rest of the AD scheme can be rewritten as a single-mode classical \emph{programmable} channel, i.e., a channel with memory $\lambda$ that can be chosen adaptively depending on previous outcomes. The corresponding conditional probability at the $j$-th use then is
\begin{eqnarray}\label{cond}
P_{Y_{j}|B_{j},Y_{(1,j-1)}}\big(y_{j}\big|&&\beta_{j},\lambda\left(y_{(1,j-1)}\right)\big)\\\nonumber&&=\operatorname{Tr}\left[\hat{E}_{y_{j}}\left(\lambda\left(y_{(1,j-1)}\right)\right)\Phi\left(\dketbra{\beta_{j}}\right)\right],
\end{eqnarray}
where $\hat{E}_{y_{j}}(\lambda)$ are the elements of the POVM $\mathcal{M}_{j}(\lambda)$ as in Eq.~\eqref{POVM}.

In light of the previous observations we can conclude that the AD of Fig.~\ref{fig1}a, with additional classical communication from Bob to Alice, is equivalent to the classical programmable channel \eqref{cond} with feedback, as shown in Fig.~\ref{fig2}. Hence the AD's optimal rate, Eq.~\eqref{opDef}, is upper bounded by the feedback capacity of \eqref{cond}. Similarly, the capacity of the programmable channel without feedback for a single use is equal to the SD's optimal rate, Eq.~\eqref{opSep}.
Eventually, the two classical capacities just defined are related via the following theorem, which is a generalization of Shannon's feedback theorem~\cite{ShanFeed,covThom} to the class of programmable channels considered:
\begin{theorem}\label{feedMemo}
The feedback capacity of a classical programmable channel is equal to its capacity without feedback and it is additive. \end{theorem}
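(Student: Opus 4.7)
The plan is to adapt Shannon's original feedback-theorem argument~\cite{ShanFeed,covThom} to the programmable channel~\eqref{cond}, whose only source of ``memory'' is the parameter $\lambda(y_{(1,j-1)})$ chosen adaptively from past outputs. Let $W$ be the classical message Alice encodes, let $B_{j}=B_{j}(W,Y_{(1,j-1)})$ denote the feedback-aided encoder outputs, and write $C_{1}=\mathcal{R}_{SD}(E)$ for the single-use capacity of~\eqref{cond} maximized over input distributions and over $\lambda\in\Lambda$, as in~\eqref{opSep}.

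First I would apply the chain rule of mutual information,
\begin{equation*}
I(W:Y_{(1,N)})=\sum_{j=1}^{N}\left[H(Y_{j}|Y_{(1,j-1)})-H(Y_{j}|W,Y_{(1,j-1)})\right].
\end{equation*}
Next, because $B_{j}$ is a deterministic function of $(W,Y_{(1,j-1)})$, one can condition on it without changing the entropy, and because the channel law~\eqref{cond} depends on the past only through $\lambda(y_{(1,j-1)})$ and the present input $B_{j}$, the variable $W$ can be dropped. Together these give $H(Y_{j}|W,Y_{(1,j-1)})=H(Y_{j}|B_{j},Y_{(1,j-1)})$, whence $I(W:Y_{(1,N)})=\sum_{j}I(B_{j}:Y_{j}|Y_{(1,j-1)})$.

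The third step is a per-letter bound. For each fixed realization $y_{(1,j-1)}$, the conditional pair $(B_{j},Y_{j})$ is generated by a single use of the memoryless channel $P(y_{j}|\beta_{j},\lambda(y_{(1,j-1)}))$, so its mutual information is at most $C_{1}$. Taking expectation over $Y_{(1,j-1)}$ and summing across $j$ yields $I(W:Y_{(1,N)})\leq N C_{1}$, so the feedback-assisted rate per channel use is at most $C_{1}$. The matching lower bound is immediate, since replicating the optimal single-letter strategy $N$ times (with no feedback) already achieves $N C_{1}$; moreover the very same upper bound never used that $B_{j}$ is fed back, so it also bounds the non-feedback $N$-use mutual information by $N C_{1}$, giving additivity in one stroke.

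The main obstacle I anticipate is the careful bookkeeping of step 2: with feedback the letter $B_{j}$ depends on the whole past $(W,Y_{(1,j-1)})$, so the collapse $H(Y_{j}|W,Y_{(1,j-1)})=H(Y_{j}|B_{j},Y_{(1,j-1)})$ must be obtained by combining two facts---$B_{j}$ is deterministic given $(W,Y_{(1,j-1)})$, and the programmable channel law is Markov with respect to $(B_{j},\lambda(Y_{(1,j-1)}))$. Once this identity is in place, the remainder is essentially a transcription of Shannon's classical argument, with the crucial observation that the single-letter capacity must be computed over inputs \emph{and} over $\lambda\in\Lambda$, which is precisely the maximization defining $\mathcal{R}_{SD}$ in~\eqref{opSep}.
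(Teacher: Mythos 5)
Your chain-rule decomposition and the collapse $H(Y_{j}|W,Y_{(1,j-1)})=H(Y_{j}|B_{j},Y_{(1,j-1)})$ match the paper's argument exactly, and your observation that the same bound covers the non-feedback multi-use case (hence additivity) is also how the paper concludes. However, your third step contains a genuine gap: you assert that for each fixed realization $y_{(1,j-1)}$ the conditional mutual information $I\left(B_{j}:Y_{j}|y_{(1,j-1)}\right)$ is at most $C_{1}=C_{1}(E)$. This is not true as stated, because $C_{1}(E)$ is defined by maximizing over input distributions obeying the \emph{average energy constraint} $E$, whereas the conditional distribution of $B_{j}$ given $y_{(1,j-1)}$ is under no such per-letter constraint --- the encoder is only bound by the total average-energy condition of Eq.~\eqref{eneCon}. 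A feedback strategy could, for instance, concentrate energy $NE$ on a single letter along some output history, in which case the corresponding conditional term is bounded only by $C_{1}(NE)$, which generically exceeds $C_{1}(E)$.

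The paper closes this gap in two moves that your proposal omits: each conditional term is first bounded by $C_{1}\!\left(E_{j}\left(y_{(1,j-1)}\right)\right)$, where $E_{j}\left(y_{(1,j-1)}\right)$ is the conditional average energy actually spent on letter $j$ given that history, and then the average of the sum $\sum_{j}C_{1}(E_{j})$ over the output distribution is bounded by $NC_{1}(E)$ using the \emph{concavity} of $C_{1}$ as a function of the energy together with the constraint $\sum_{j}E_{j}\left(y_{(1,j-1)}\right)=NE$. Without invoking concavity (which itself deserves a one-line justification, e.g.\ by a time-sharing argument) the per-letter bound does not go through in the energy-constrained setting, which is precisely the setting relevant to coherent-state signaling. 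The rest of your argument, including the lower bound by replication of the optimal single-letter strategy, is correct and coincides with the paper's.
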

\begin{proof}
Suppose we employ the channel to transmit a classical message $w\in W$ with probability distribution $P_{W}(w)$, outputting $y_{j}\in Y_{j}$ for each use $j$; the most general technique allows a feedback to the sender, who encodes the input message into a sequence of letters $\beta_{j}\in B_{j}$ through an encoding function $\beta_{j}=f\left(w, y_{(1,j-1)}\right)$ for each use $j$. If $\beta$ represents the complex amplitude of a signal we must impose a total average-energy constraint as in Eq.~\eqref{eneCon}. The feedback capacity of this classical programmable channel at constrained total average-energy per mode $E$ is obtained by maximizing the mutual information over the input distribution, the encoding functions and the programmable parameters $\lambda\left(y_{(1,j-1)}\right)$ for each use:  
\begin{eqnarray}\label{multiMode}
C^{fb}_{\infty}(E)=\lim_{N\rightarrow\infty}\max_{\substack{P_{W}\left(w\right),\\ f\left(w, y_{(1,j-1)}\right),\\ \lambda\left(y_{(1,j-1)}\right)\in\Lambda}}\frac{I\left(W:Y_{(1,N)}\right)}{N}.\hspace{.5cm}
\end{eqnarray}
Similarly, for independent uses of the channel without feedback, the capacity at constrained average-energy $E$ can be defined as
\begin{eqnarray}\label{singMode}
C_{1}(E)=\max_{\substack{P_{B_{1}}\left(\beta_{1};E\right),\\ \lambda\in\Lambda}}I\left(B_{1}:Y_{1}\right).
\end{eqnarray}
Now let us note that $C^{fb}_{\infty,\mathcal{C}}(E)\geq C_{1,\mathcal{C}}(E)$, since among all adaptive schemes involved in the optimization \eqref{multiMode} there is one which employs no feedback and the same single-mode measurements that are optimal for Eq.~\eqref{singMode}. To prove the opposite consider the following:
 \begin{eqnarray}
I\big(W:Y_{(1,N)}\big)&=&\sum_{j=1}^{N}I\left(B_{j}:Y_{j}\big|Y_{(1,j-1)}\right)\nonumber\\
&\leq&\ave{\sum_{j=1}^{N}C_{1}\left(E_{j}\left(y_{(1,j-1)}\right)\right)}_{P_{Y_{(1,N)}}\left(y_{(1,N)}\right)}\nonumber\\
&\leq& N C_{1}(E),\label{final}
\end{eqnarray}
where the first equality follows form the chain rule of mutual information and the fact that conditioning over $W$ and $Y_{(1,j-1)}$ is equivalent to conditioning over $B_{j}$ and $Y_{(1,j-1)}$ thanks to the encoding functions, i.e., $H\left(Y_{j}\big|W,Y_{(1,j-1)}\right)=H\left(Y_{j}\big|B_{j},Y_{(1,j-1)}\right)$. The first inequality instead is obtained by employing the definition of Eq.~\eqref{singMode} as an upper bound on each mutual information term in the sum and writing explicitly the average over the output distribution;  the last inequality follows from concavity of the classical capacity as a function of the energy and the total average-energy per mode constraint, i.e., $\sum_{j}E_{j}(y_{(1,j-1)})=NE$.
Eventually by plugging Eq.~\eqref{final} into the definition~\eqref{multiMode} we obtain the upper bound $C^{fb}_{\infty}(E)\leq C_{1}(E)$.
\end{proof}
This implies that the AD's optimal rate is not greater than the SD's one. Since the former is certainly not smaller than the latter, we conclude $\mathcal{R}_{AD}(E)=\mathcal{R}_{SD}(E)$. 

\section{Implications and conclusions}\label{end}
Our analysis implies that a broad class of adaptive decoders for coherent communication on phase-insensitive Gaussian channels, including a majority of those most easily realizable with current technology, cannot beat the optimal single-mode-measurement rate of information transmission. This in turn seems to suggest that such decoders cannot achieve the HSW capacity of phase-insensitive Gaussian channels; however there is no actual proof that joint decoders are really necessary for the task, so that this possibility remains open. In any case our result does not mean that block-coding techniques and adaptive receivers are completely useless for practical applications; indeed in general there may exist specific AD schemes that are more convenient to implement than SD ones and perform equally well, e.g., see Hadamard codes~\cite{Guha1,Guha2,Banaszek,NOSTROHad}.\\
Let us also note that, despite our result is very powerful in decoupling the AD's multi-mode structure for any kind of single-mode POVM, still the difficult optimization of the SD rate of Eq.~\eqref{opSep} is left if one wants an explicit expression of the rate for any set of POVM's. For example we can simplify this calculation for the set of single-mode receivers comprising a coherent displacement followed by any other kind of single-mode operation (the Kennedy receiver of Eq.~\eqref{ken1} belongs to this set). Indeed let us define the variance of a single-mode input probability distribution $P_{A}(\alpha)$ over coherent states as $V=\ave{|\alpha|^{2}}_{P_{A}(\alpha)}-|\ave{\alpha}_{P_{A}(\alpha)}|^{2}$; the energy is instead $E=\ave{|\alpha|^{2}}_{P_{A}(\alpha)}$. One can decide to put a constraint either on the energy or on the variance of the input signals and the former is stricter than the latter. It can then be shown that the net effect of the displacement in a coherent-state receiver is simply to enlarge the family of allowed input distributions from the energy- to the variance-constrained ones so that the optimal rate \eqref{opSep} can be computed on a shrunken set of allowed POVM's. \\
A particularly useful kind of single-mode receivers is that of Kennedy, defined by Eqs.~(\ref{ken1},\ref{ken2}), employing a coherent displacement and an on-off photodetector. The SD's optimal rate for this receiver has been computed in the low-energy limit $E\ll 1$ in~\cite{guhaDet,guhaDet1}, showing that it equals
\begin{equation}
\mathcal{R}^{ken}_{SD}(E)=E\log\frac{1}{E}-E\log\log\frac{1}{E}+O(E).
\end{equation}
Moreover the same authors have shown that an AD scheme without unitaries has the same optimal rate and conjectured that also adaptive unitaries do not help. Our result exactly implies the validity of this conjecture for the particular choice of POVM's~(\ref{ken1},\ref{ken2}). \\
Eventually our result intersects with those of ~\cite{takeokaGuha2,ShorAd}, expanding the set of adaptive receivers whose optimal rate is equal to that of separable ones. Indeed~\cite{takeokaGuha2} compute the capacity of coherent communication with arbitrary adaptive Gaussian measurements, showing it is separable; here instead we considered a restricted interaction set, i.e., passive Gaussians, but an extended single-mode measurement one, i.e., arbitrary POVM's. As for~\cite{ShorAd}, it is stated there that adaptive schemes based on partial single-mode measurements of all the modes may increase the optimal rate; here we considered only destructive single-mode measurements but included the simplest kind of interactions and still could not surpass separable decoding rates.  In particular, as stated in Sec.~\ref{intro}, our AD includes the use of ancillary systems if they interact with just one of the received modes, since it can be thought of as a part of the single-mode destructive measurements. Unfortunately the interaction of ancillary systems with multiple modes is not included, since it results in non-destructive measurements that could provide an advantage over SD's.
Future lines of research could be: studying the lesser-known, interesting class of non-destructive adaptive decoders, computing explicitly the optimal rate for other classes of POVM's, exploring the potential of squeezing and non-Gaussian interactions.

\end{document}